\documentclass[11pt]{article}

\usepackage[a4paper, margin=1in]{geometry}
\usepackage{amsmath, amssymb, amsthm}
\usepackage{algorithm}
\usepackage{algpseudocode}
\usepackage{graphicx}
\usepackage{booktabs}
\usepackage[font=small, labelfont=bf]{caption}
\usepackage[affil-it]{authblk}
\usepackage[colorlinks=true, citecolor=blue, linkcolor=blue, urlcolor=blue]{hyperref}
\usepackage{xcolor}
\usepackage{enumitem}

\newtheorem{proposition}{Proposition}

\title{Decision Support under Prediction-Induced Censoring}

\author[1]{Yan Chen}
\author[1]{Ruyi Huang}
\author[1]{Cheng Liu\thanks{Corresponding Author. Email: cliu647@cityu.edu.hk}}
\affil[1]{Department of Systems Engineering, City University of Hong Kong, Hong Kong SAR, China}

\date{}

\begin{document}
\maketitle

\begin{abstract}
In many data-driven online decision systems, actions determine not only operational costs but also the data availability for future learning---a phenomenon termed \textbf{Prediction-Induced Censoring (PIC)}. This challenge is particularly acute in large-scale resource allocation for generative AI (GenAI) serving: insufficient capacity triggers shortages but hides the true demand, leaving the system with only a "greater-than" constraint. Standard decision-making approaches that rely on uncensored data suffer from selection bias, often locking the system into a self-reinforcing low-provisioning trap. To break this loop, this paper proposes an adaptive approach named PIC-Reinforcement Learning \textbf{(PIC-RL)}, a closed-loop framework that transforms censoring from a data quality problem into a decision signal. PIC-RL integrates (1) \textbf{Uncertainty-Aware Demand Prediction} to manage the information--cost trade-off, (2) \textbf{Pessimistic Surrogate Inference} to construct decision-aligned conservative feedback from shortage events, and (3) \textbf{Dual-Timescale Adaptation} to stabilize online learning against distribution drift. The analysis provides theoretical guarantees that the feedback design corrects the selection bias inherent in naive learning. Experiments on production Alibaba GenAI traces demonstrate that PIC-RL consistently outperforms state-of-the-art baselines, reducing service degradation by up to 50\% while maintaining cost efficiency.
\end{abstract}

\noindent\textbf{Keywords:} Online Decision Support, Resource Management, Censored Information, Reinforcement Learning, Generative AI

\begin{figure*}[t]
  \centering
  \includegraphics[width=\textwidth]{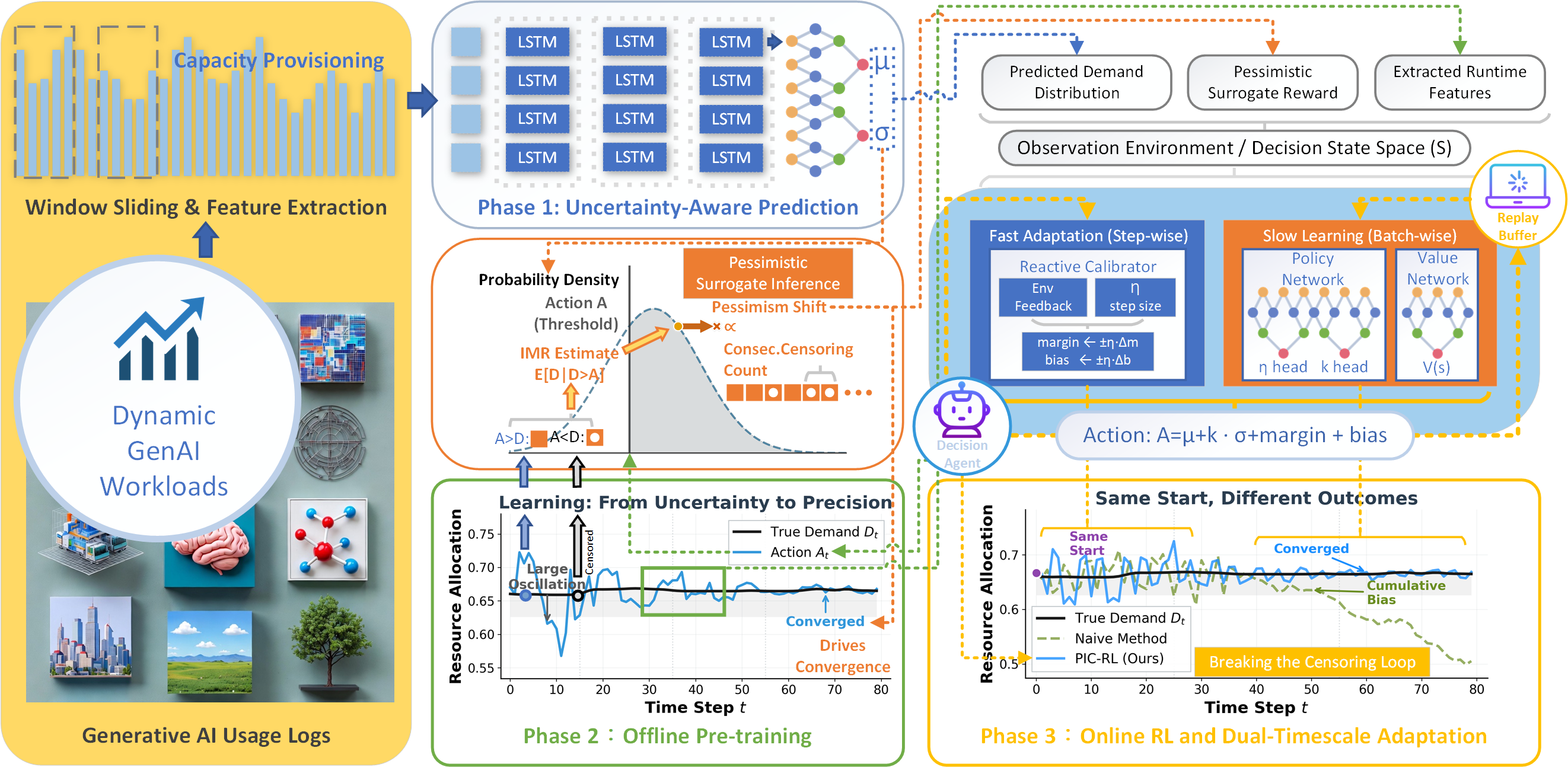}
  \caption{\textbf{The PIC-RL Framework.} A three-phase architecture transforming censoring from a missing-label problem into a supervision signal: (1) Uncertainty-Aware Prediction, (2) Offline Pre-training, and (3) Online RL and Dual-Timescale Online Adaptation.}
  \label{fig:framework}
\end{figure*}

\section{Introduction}
Large-scale cloud services typically rely on the pipeline "online demand forecasting $\rightarrow$ capacity provisioning" to balance utilization and service-level objectives (SLOs). This tension is acute for GenAI and large language model (LLM) inference serving, where providers must meet stringent latency targets while controlling expensive GPU costs. Under volatile workloads, practitioners often over-provision to avoid risks~\cite{ref1}, yet static allocations or black-box scheduling struggle to match heterogeneous resource demands~\cite{ref2,ref3,yuan2023bicritic}. Recent works explore pooling or peak-shaping to improve efficiency~\cite{ref4,ref5}, but they fundamentally rely on feedback loops that break under shortage. In provisioning, actions determine not only cost but also feedback observability: when resources are insufficient ($a_t < d_t$), operators observe only the capacity limit rather than true demand~\cite{ref3}. This motivates a fundamental question: \textbf{how should learning proceed when online actions shape the data being observed?}

This paper studies this problem through the lens of PIC. Unlike standard supervised learning, PIC creates a self-reinforcing loop: underestimation $\rightarrow$ censored feedback $\rightarrow$ biased learning $\rightarrow$ persistent failure~\cite{ref7}. The \textbf{action-as-threshold} mechanism introduces a unique information--cost coupling: acquiring informative labels requires paying higher resource costs~\cite{ref7}. While censored demand is well-studied in inventory management~\cite{ref11,ref12,li2025fliggy} and econometrics~\cite{ref20,ref21}, these fields predominantly assume \textbf{exogenous censoring} driven by external constraints rather than the learner's policy. Consequently, classical base-stock policies~\cite{ref13,ref14} or regression methods (e.g., Tobit~\cite{oneill2024tobit})~\cite{ref22,ref23} fail to model the causal feedback loop where the decision itself induces the data distribution shift. Under nonstationary workloads, the inability to actively manage the information--cost trade-off leads to slow adaptation or collapse~\cite{ref8,ref9,xiang2025parcel}.

To break this loop, \textbf{PIC-RL} is introduced (Figure~\ref{fig:framework}), transforming censoring from a "missing label" problem into a supervision signal. The framework rests on two PIC-aligned innovations. First, \textbf{Pessimistic Surrogate Inference} (Phase 2, Figure~\ref{fig:framework} Middle) synthesizes conservative rewards from censored events to correct the selection bias inherent in uncensored-only learning (Propositions~1--2; see Section~3). Second, \textbf{Dual-Timescale Adaptation} (Phase 3, Figure~\ref{fig:framework} Right) couples a fast \textit{Reactive Calibrator} for immediate volatility with a slow \textit{Policy Network} for robust strategy refinement. Experiments on Alibaba GenAI traces show that PIC-RL prevents the passive service degradation observed in baselines~\cite{ref29}, providing a foundation for learning under endogenous feedback constraints.

\section{Problem Definition}

Consider a discrete-time horizon \(t=1,2,\dots,T\). At each time step, there is a latent normalized demand \(d_t\in[0,1]\) and the system chooses an action \(a_t\in[0,1]\), interpreted as a provisioning level. The action affects cost immediately and, crucially in PIC, determines the feedback observability. Let \(H_t\) denote the history available before acting at time \(t\), and let \(\pi\) be a policy that selects actions as \(a_t=\pi(H_t)\).

The observation model follows the action-as-threshold structure. After choosing \(a_t\), the system observes the truncated value \(y_t=\min(d_t,a_t)\) together with a censoring indicator \(c_t=\mathbf{1}[d_t>a_t]\). When \(c_t=0\), the system observes the ground truth \(y_t=d_t\); when \(c_t=1\), it only observes \(y_t=a_t\) and the inequality \(d_t>a_t\). Thus, the online feedback is \((y_t,c_t)\), meaning the availability of ground-truth labels is causally coupled to the decision itself.

Decision quality is evaluated using an asymmetric linear cost. For any \((a_t,d_t)\), the per-step cost is
\begin{equation}
\ell(a_t,d_t)=c_{\mathrm{under}}(d_t-a_t)_+ + c_{\mathrm{over}}(a_t-d_t)_+,
\end{equation}
where \(c_{\mathrm{under}}>c_{\mathrm{over}}\) (typically \(c_{\mathrm{under}}=2, c_{\mathrm{over}}=1\)). The primary metric is the cumulative regret:
\begin{equation}
\mathrm{Regret}(\pi)\triangleq \sum_{t=1}^{T}\ell(a_t,d_t).
\label{eq:regret}
\end{equation}
The Mean Absolute Error (MAE) is also reported to measure tracking accuracy.

\paragraph{Protocol: Offline-to-Online under Drift.}
Access to historical data is assumed for offline pretraining. However, the online demand process may undergo distribution shift (concept drift). The system must adapt online using only the partial PIC feedback \((y_t,c_t)\), without access to immediate ground truth when censored. This setting captures the core challenge: warm-starting from historical logs while adapting to nonstationary demand under action-dependent censoring.

\section{Methodology}

\textbf{PIC-RL} is a closed-loop framework designed to realign "decision $\rightarrow$ observability $\rightarrow$ learning signal" under PIC. As illustrated in Figure~\ref{fig:framework}, it operates in three progressive phases. Phase 1 (Uncertainty-Aware Prediction) leverages historical logs to train an uncertainty-aware demand predictor. Phase 2 (Offline Pre-training) tackles censoring by constructing a \textit{Pessimistic Surrogate Inference} mechanism, converting censorship constraints into directionally correct supervision signals for policy initialization. Finally, Phase 3 (Online Reinforcement Learning (RL) and Dual-Timescale Adaptation) deploys a dual-timescale mechanism: a \textit{Reactive Calibrator} (fast loop) handles immediate feedback volatility, while a \textit{Slow Policy Network} (outer loop) refines the strategy using replayed experiences.

\subsection{Phase 1: Uncertainty-Aware Prediction}

Standard point predictors are insufficient under PIC because they fail to quantify the risk of censoring. To enable risk-aware provisioning, a probabilistic predictor \(f_\theta\) is trained to output a distribution over future demand, parameterized by a mean \(\mu_t\) and a standard deviation \(\sigma_t\). Implemented as an LSTM~\cite{jia2017incremental} with a Gaussian likelihood head, the model minimizes the negative log-likelihood (NLL) on historical data:
\begin{equation}
\mathcal{L}_{\text{pred}} = \frac{1}{N}\sum_{i=1}^{N} \left[ \log \sigma_i + \frac{1}{2}\left(\frac{d_i - \mu_i}{\sigma_i}\right)^2 \right].
\end{equation}
Crucially, the learned uncertainty \(\sigma_t\) serves as a dynamic signal for the downstream policy: when \(\sigma_t\) is high, the system can choose to pay a higher resource cost (via a safety buffer) to reduce censoring risk and acquire more informative feedback.

\subsection{Phase 2: Offline Pretraining}

Phase 2 pretrains a policy network using offline rollouts on the training data. The key innovation is a theoretically grounded surrogate reward for censored steps, which allows the policy to learn from censoring events rather than ignoring them.

\subsubsection{Policy and Value Networks}

The policy network \(\pi_\phi\) takes a state vector \(s_t\) as input and outputs two quantities:
\begin{equation}
(\eta_t, k_t) = \pi_\phi(s_t),
\end{equation}
where:
\begin{itemize}
  \item \(\eta_t \in [0.5, 3.0]\) is a step-size multiplier that controls the speed of fast calibration updates in Phase 3.
  \item \(k_t \in [0, 2.0]\) is the uncertainty utilization coefficient that determines how much of \(\sigma_t\) to add as a safety buffer.
\end{itemize}

A value network \(V_\psi(s_t)\) provides a baseline for variance reduction in policy gradient updates.

\subsubsection{State Design}

The RL agent operates on a rich state space \(s_t\) constructed from extracted runtime features (Figure~\ref{fig:framework}, top-right), providing the policy with sufficient information to make informed decisions under PIC. It includes three categories of features:
\begin{enumerate}
  \item \textbf{Calibration and feedback statistics:} current margin, bias, recent censoring rate, consecutive censoring count, consecutive over-provision count.
  \item \textbf{Sequence statistics:} mean and standard deviation of recent observations, progress through the episode.
  \item \textbf{Prediction and uncertainty estimates:} \(\mu_t, \sigma_t\) from the predictor, plus outputs from a censored-demand estimator (estimated mean, estimated standard deviation, pessimism factor, uncertainty measure).
\end{enumerate}

This design ensures that the policy can "see" its own uncertainty and the current feedback regime, enabling it to adapt its behavior (e.g., become more conservative when information is scarce).

\subsubsection{Learnable Feedback Under Censoring}

A central challenge in Phase 2 is how to assign rewards when censoring occurs and the true demand \(d_t\) is not observed. This is addressed through a theoretically motivated surrogate reward.

\paragraph{The Bias of Uncensored-Only Learning.}
A natural but flawed approach is to simply skip censored steps or impute the censored demand. The following result formalizes why this is problematic.

\begin{proposition}[Instability under Mixture]
Consider a naive learner that fits demand using a mixture of (i) historical uncensored observations and (ii) current censored observations under threshold action $A_t$, with mixture weight $\rho \in (0,1]$ on the censored stream. Its effective target mean is
\(\mu_{\text{mix}}(A_t) = (1-\rho)\mathbb{E}[D] + \rho\, \mathbb{E}[D \mid D \le A_t]\).
If censoring is active ($P(D > A_t) > 0$), then $\mu_{\text{mix}}(A_t) < \mathbb{E}[D]$. Consequently, any update rule that tracks this target (e.g., setting the next base level to $\mu_{\text{mix}}$) exhibits a strictly negative drift even at the unbiased point $A_t=\mathbb{E}[D]$.
In particular, this covers stochastic approximation updates of the form $A_{t+1}=A_t+\gamma_t(\mu_{\text{mix}}(A_t)-A_t)$ with $\gamma_t>0$.
\end{proposition}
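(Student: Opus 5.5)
The plan is to reduce the claim to the strict inequality $\mathbb{E}[D \mid D \le A_t] < \mathbb{E}[D]$ whenever $p \triangleq P(D > A_t) \in (0,1)$. The statement about drift then follows by direct substitution.

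\textbf{Step 1 (law of total expectation).} Write
\[
\mathbb{E}[D] = (1-p)\,\mathbb{E}[D \mid D \le A_t] + p\,\mathbb{E}[D \mid D > A_t].
\]
Since $D > A_t \ge D'$ pointwise for any outcomes $D$ in the upper event and $D'$ in the lower event, we get
\[
\mathbb{E}[D \mid D > A_t] > A_t \ge \mathbb{E}[D \mid D \le A_t],
\]
where the first inequality is strict because $D > A_t$ holds on a set of positive probability. With $p>0$, this yields
\[
\mathbb{E}[D] - \mathbb{E}[D \mid D \le A_t] = p\bigl(\mathbb{E}[D \mid D > A_t] - \mathbb{E}[D \mid D \le A_t]\bigr) > 0 .
\]

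\textbf{Step 2 (mixture).} We have
\[
\mu_{\text{mix}}(A_t) - \mathbb{E}[D] = \rho\bigl(\mathbb{E}[D \mid D \le A_t] - \mathbb{E}[D]\bigr),
\]
which is strictly negative because $\rho > 0$.

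\textbf{Step 3 (drift).} At $A_t = \mathbb{E}[D]$, any rule that sets the next base level to $\mu_{\text{mix}}$ moves strictly below $\mathbb{E}[D]$. For the stochastic-approximation form, the increment is $\gamma_t(\mu_{\text{mix}}(A_t) - \mathbb{E}[D])$, which is strictly negative because $\gamma_t > 0$.

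\textbf{Main obstacle: well-posedness of the conditional expectation.} The statement only assumes $p>0$, so I would handle the degenerate case $P(D \le A_t) = 0$, i.e.\ $p = 1$, separately. In that case the censored stream contains no uncensored draws, and $\mathbb{E}[D \mid D \le A_t]$ is undefined. I would either add the harmless assumption $P(D\le A_t)>0$, or interpret the censored-stream observation as $\min(D,A_t)$.

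In the latter reading, the argument changes as follows. The target becomes $\mathbb{E}[\min(D,A_t)]$, and
\[
\mathbb{E}[D] - \mathbb{E}[\min(D,A_t)] = \mathbb{E}[(D-A_t)_+] > 0
\]
whenever $p>0$, so the same conclusion holds without any conditioning issues. I would present this as a remark so that the result covers both the "skip censored points" and the "impute at threshold" naive learners described in the text.

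Integrability is automatic because $D \in [0,1]$.
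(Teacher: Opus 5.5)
Your proof is correct and follows essentially the same route as the paper: the law of total expectation gives $\mathbb{E}[D\mid D\le A_t] < \mathbb{E}[D]$, and substituting into the mixture gives the gap $\rho\bigl(\mathbb{E}[D\mid D\le A_t]-\mathbb{E}[D]\bigr) = -\rho\,\Delta(A_t) < 0$, hence the negative drift. Your separate treatment of the case $P(D\le A_t)=0$, where the conditional mean is undefined, is a refinement the paper's proof omits; note also that this case cannot occur at the unbiased point $A_t=\mathbb{E}[D]$, since $P(D\le \mathbb{E}[D])>0$ always holds.
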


\begin{figure}[t]
\centering
\includegraphics[width=\linewidth]{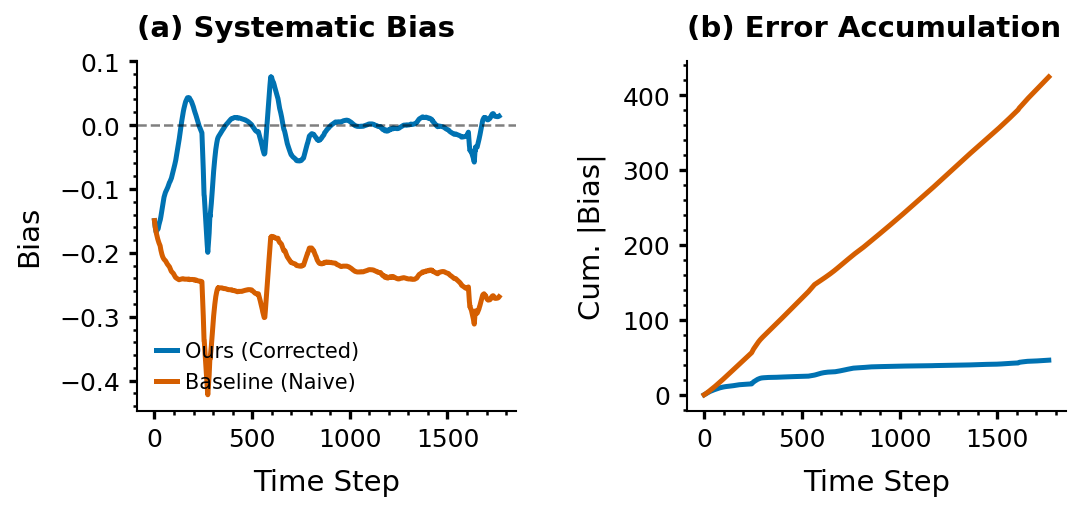}
\caption{Verification of Proposition 1 (Instability). (a) Naive learning exhibits systematic negative bias. (b) Cumulative error confirms that the system inevitably drifts into a ``censoring trap,'' even with historical data replay.}
\label{fig:theory-bias}
\end{figure}

\begin{proof}
Let the true demand expectation be $\mu^* = \mathbb{E}[D]$. The learner's target minimizes risk over the mixed distribution:
\begin{equation}
\mathbb{E}_{\text{train}}[D] = (1-\rho)\mu^* + \rho\, \mathbb{E}[D \mid D \le A_t].
\end{equation}
By the law of total expectation, $\mu^* = \mathbb{E}[D \mid D \le A_t]P(D \le A_t) + \mathbb{E}[D \mid D > A_t]P(D > A_t)$. Since $D > A_t$ implies values strictly greater than those in the observed set, it holds that $\mathbb{E}[D \mid D > A_t] > \mathbb{E}[D \mid D \le A_t]$, which implies $\mu^* > \mathbb{E}[D \mid D \le A_t]$. Let the bias gap be $\Delta(A_t) = \mu^* - \mathbb{E}[D \mid D \le A_t] > 0$.
Substituting back:
\begin{equation}
\mathbb{E}_{\text{train}}[D] = (1-\rho)\mu^* + \rho\, (\mu^* - \Delta(A_t)) = \mu^* - \rho\, \Delta(A_t).
\end{equation}
Thus, $\mathbb{E}_{\text{train}}[D] < \mu^*$. If $A_t = \mu^*$, the new target is strictly lower, creating a negative drift force $-\rho\, \Delta(A_t)$.
\end{proof}

\begin{proposition}[Consistency and Escape of Surrogate Reward]
Let the surrogate reward be \(r(a) \propto -(\hat{\mu} + \hat{\sigma}\lambda(z) - a) \cdot \Psi(n)\), where \(z \triangleq (a-\hat{\mu})/\hat{\sigma}\). Under the Gaussian assumption:
\begin{enumerate}
    \item \textbf{Gradient Consistency:} \(\frac{\partial r}{\partial a} > 0\) for all \(a\). The gradient consistently incentivizes increasing the action when censored.
    \item \textbf{Pessimistic Escape:} For a fixed action \(a\), as the consecutive censoring count \(n\) increases, the gradient magnitude \(|\frac{\partial r}{\partial a}|\) scales with \(\Psi(n)\), guaranteeing a growing force to escape persistent under-provisioning traps.
\end{enumerate}
\end{proposition}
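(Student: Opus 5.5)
The plan is to identify $\lambda(z)$ with the inverse Mills ratio $\lambda(z)=\phi(z)/(1-\Phi(z))$. This is the natural reading under the Gaussian assumption, because $\hat\mu+\hat\sigma\lambda(z)=\mathbb{E}[D\mid D>a]$ for $D\sim\mathcal{N}(\hat\mu,\hat\sigma^2)$. The bracket $g(a)\triangleq\hat\mu+\hat\sigma\lambda(z)-a$ is then the expected residual shortfall $\mathbb{E}[D-a\mid D>a]$, which is strictly positive. We treat $\Psi(n)>0$ as a positive, nondecreasing weight that depends only on the consecutive censoring count $n$ and not on $a$. We also write the proportionality constant as $\kappa>0$, so that $r(a)=-\kappa\,g(a)\,\Psi(n)$.

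For \textbf{Gradient Consistency} we differentiate. Since $dz/da=1/\hat\sigma$, we get $g'(a)=\lambda'(z)-1$, and therefore $\partial r/\partial a=\kappa\,\Psi(n)\,(1-\lambda'(z))$. The claim reduces to the classical inequality $\lambda'(z)<1$ for all $z\in\mathbb{R}$. We would first record the identity $\lambda'(z)=\lambda(z)(\lambda(z)-z)$, which follows from $\phi'=-z\phi$. The needed fact is then $0<\lambda(z)(\lambda(z)-z)<1$, which is equivalent to $\mathrm{Var}(X\mid X>z)=1-\lambda(z)(\lambda(z)-z)\in(0,1)$ for standard normal $X$. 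Both parts follow from known results. Positivity of the truncated variance is immediate, since a nondegenerate truncated normal has positive variance. The bound $\mathrm{Var}(X\mid X>z)<1$ is the standard result that truncating a log-concave density strictly reduces variance; it is also Sampford's (1953) inequality for the Mills ratio. Equivalently, this says $g$ is strictly decreasing. Intuitively, the mean residual life of the Gaussian is strictly decreasing, because the normal distribution has increasing failure rate, and this gives $g'<0$ directly. We would present the IFR / mean-residual-life argument as the main route. It is cleaner: for an IFR distribution, $m(a)=\mathbb{E}[D-a\mid D>a]$ satisfies $m'(a)=h(a)m(a)-1$, where $h$ is the hazard rate. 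The condition $h(a)m(a)<1$ holds strictly for strictly increasing hazard, because $m(a)=\int_a^\infty \bar F(s)/\bar F(a)\,ds<\int_a^\infty e^{-h(a)(s-a)}ds=1/h(a)$.

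For \textbf{Pessimistic Escape}, we fix $a$, so $z$ and hence $1-\lambda'(z)>0$ are fixed. Then $|\partial r/\partial a|=\kappa\,(1-\lambda'(z))\,\Psi(n)$ is exactly proportional to $\Psi(n)$. If $\Psi$ is increasing, for instance $\Psi(n)=1+\beta n$ or $\Psi(n)=\beta^n$ with $\beta>1$, the magnitude is nondecreasing in $n$ and unbounded whenever $\Psi$ is. This yields the growing escape force. We would also note that the consistency part guarantees this force always points toward larger $a$.

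The main obstacle is the strict inequality $\lambda'(z)<1$ uniformly in $z$. The bound is tight as $z\to+\infty$, where $\lambda(z)\sim z+1/z$ and $\lambda'\to1$, so any argument must be strict but not uniform. The step therefore needs care. We would rely on the hazard-rate bound above, since it avoids asymptotic case analysis. If a referee prefers a self-contained route, we would fall back on Sampford's inequality.
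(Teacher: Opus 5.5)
Your proposal is correct and follows the paper's proof: you write $r=-\kappa\,g(a)\,\Psi(n)$ with $\Psi$ independent of $a$, differentiate to get $\partial r/\partial a=\kappa\,\Psi(n)\,(1-\lambda'(z))$, and read off both the consistency claim and the proportionality to $\Psi(n)$ from $\lambda'(z)<1$. The only difference is that the paper cites $0<\lambda'<1$ (Baricz) while you justify it, and your own identity $1-\lambda'(z)=\mathrm{Var}(X\mid X>z)>0$ already settles the needed direction in one line, so $\lambda'<1$ is not really an obstacle; the IFR/mean-residual-life detour and the Sampford/log-concavity remark are unnecessary (the latter concerns $\lambda'>0$, which the sign argument does not use).
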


\begin{figure}[t]
\centering
\includegraphics[width=\linewidth]{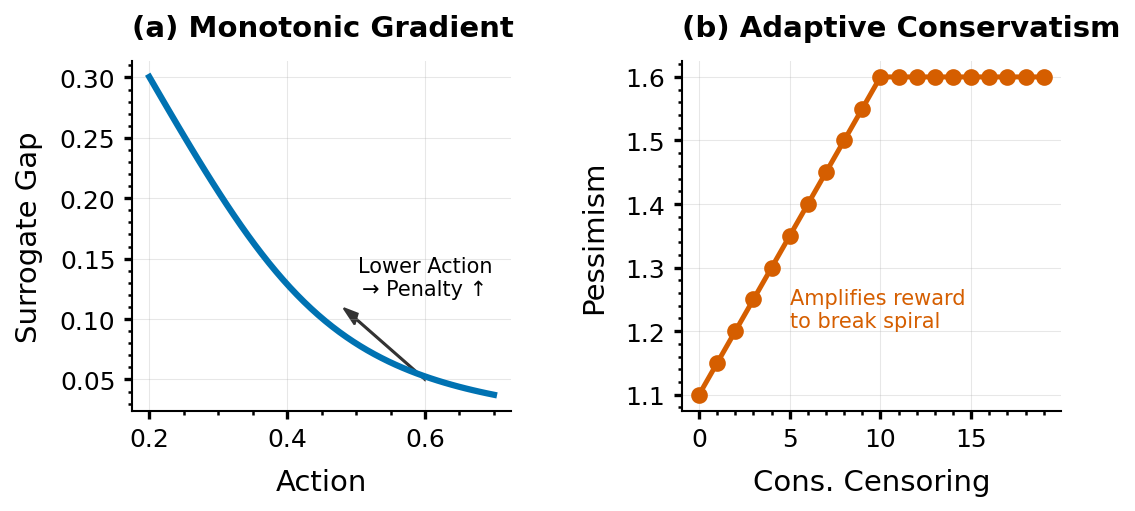}
\caption{Mechanisms of Proposition 2. (a) Strict monotonicity of the surrogate gap ensures gradient consistency ($\partial r/\partial a > 0$). (b) The pessimism factor $\Psi(n)$ amplifies reward signals super-linearly to enable escape from censoring traps.}
\label{fig:theory-mechanism}
\end{figure}

\begin{proof}
Let \(G(a) = \hat{\mu} + \hat{\sigma}\lambda(\frac{a-\hat{\mu}}{\hat{\sigma}}) - a\). The reward is \(r(a) = -C \cdot G(a) \cdot \Psi\), where \(C > 0\) is a constant. Differentiating \(G(a)\) with respect to \(a\):
\begin{equation}
\frac{\partial G}{\partial a} = \hat{\sigma} \lambda'(z) \frac{1}{\hat{\sigma}} - 1 = \lambda'(z) - 1.
\end{equation}
A key property of the Inverse Mills Ratio is that \(0 < \lambda'(z) < 1\) for all \(z \in \mathbb{R}\)~\cite{baricz2008mills}. Therefore, \(\frac{\partial G}{\partial a} \in (-1, 0)\), meaning the estimated gap strictly decreases as action increases. Consequently, the reward gradient is:
\begin{equation}
\frac{\partial r}{\partial a} = -C \Psi (\lambda'(z) - 1) = C \Psi (1 - \lambda'(z)) > 0.
\end{equation}
This confirms \textbf{Consistency}: the policy always receives a positive reward signal for raising actions. Furthermore, since \(1 - \lambda'(z)\) is bounded away from zero locally, the gradient magnitude is proportional to \(\Psi(n)\). As \(n \to N_{\max}\), \(\Psi(n)\) grows, providing the \textbf{Escape} property.
\end{proof}

\paragraph{Surrogate Reward Construction (Pessimistic Surrogate Inference).}
For censored steps, the \textit{Pessimistic Surrogate Inference} mechanism (Figure~\ref{fig:framework}, Middle) uses the Inverse Mills Ratio (IMR)~\cite{heien1990demand} to define the surrogate reward as:
\begin{equation}
r_t^{\text{cens}} = -c_{\text{under}} \cdot \underbrace{\left( \hat{\mu}_t + \hat{\sigma}_t \lambda(\hat{\alpha}_t) - a_t \right)}_{\text{Expected Gap (IMR)}} \cdot \underbrace{\Psi(n_t)}_{\text{Pessimism}},
\end{equation}
where \(\Psi(n_t) = 1 + \beta \cdot \min(n_t, N_{\max})\) scales non-linearly with the duration of information blackout.
For uncensored steps, the true asymmetric cost is used:
\begin{equation}
r_t^{\text{uncens}} = -\ell(a_t, d_t) = -c_{\text{under}}(d_t - a_t)_+ - c_{\text{over}}(a_t - d_t)_+.
\end{equation}
This surrogate reward allows the policy to receive informative gradients even when ground truth is hidden, addressing the core PIC challenge.

Here \(\Psi(n_t)\) represents the pessimism factor that creates a ``break-out'' gradient: if the system gets stuck in a censoring loop, the surging pessimism forces the policy to drastically raise actions, re-acquiring ground truth.

\subsubsection{Offline RL Training}

Rollouts are performed on the training data to simulate the PIC observation mechanism: at each step, censoring is recorded and the observation sequence is updated accordingly (appending \(d_t\) if uncensored, \(a_t\) if censored). The policy and value networks are updated using an Actor--Critic algorithm~\cite{grondman2012survey} with the surrogate rewards defined above.

The key benefit of Phase 2 is that the policy learns ``how to behave under censoring'' before facing real PIC feedback. This warm-start is critical: without it, Phase 3 would begin with a randomly initialized policy that may get trapped in the self-reinforcing underestimation loop.

\subsection{Phase 3: Online RL and Dual-Timescale Adaptation}

In online deployment, the system must adapt to distribution drift while managing the immediate feedback loop. The provisioning action \(a_t\) is formulated as a hierarchical composition of three components:
\begin{equation}
a_t = \underbrace{\mu_t}_{\text{Base}} + \underbrace{k_t(s_t) \cdot \sigma_t}_{\text{Risk Buffer}} + \underbrace{\Delta_t}_{\text{Reactive Correction}}.
\end{equation}
Here, \((\mu_t, \sigma_t)\) are from the frozen predictor. The \textbf{Risk Buffer} is controlled by the RL policy output \(k_t\), allowing the agent to dynamically trade cost for information based on state uncertainty. The \textbf{Reactive Correction} \(\Delta_t = m_t + b_t\) is updated by the \textit{Reactive Calibrator} (Figure~\ref{fig:framework} Right) at every step:
\begin{equation}
\begin{aligned}
m_{t+1} &= m_t + \eta_t(s_t) \delta_m \cdot [\mathbb{I}(c_t) - \mathbb{I}(\neg c_t \land a_t > y_t)], \\
b_{t+1} &= b_t + \eta_t(s_t) \delta_b \cdot [\mathbb{I}(c_t) - \gamma \mathbb{I}(\neg c_t \land a_t > y_t)].
\end{aligned}
\label{eq:calibrator-update}
\end{equation}
This dual-timescale design ensures stability: the fast inner loop (\(\Delta_t\)) handles high-frequency volatility, while the slow outer loop (Policy \(\pi_\phi\)) optimizes structural parameters \((\eta_t, k_t)\).

\emph{Remark (Dual-Timescale Stability).} The interaction between the fast calibrator (\(\Delta_t\)) and slow policy (\(\pi_\phi\)) forms a two-timescale stochastic approximation system. Under standard Lipschitz assumptions, the separation of time scales ensures asymptotic stability~\cite{borkar_stochastic} (Theorem~\ref{thm:stability}).

\paragraph{Theorem 1 (Dual-Timescale Stability).}
\label{thm:stability}
For analysis, consider diminishing step sizes (standard in stochastic approximation); in implementation, small constants are used to track drift. Let $x_t$ denote slow policy parameters and $y_t \triangleq (m_t,b_t)$ denote fast calibration variables (Eq.~\ref{eq:calibrator-update}). Consider the coupled updates
\begin{align}
    x_{t+1} &= x_t + \alpha_t [H(x_t, y_t) + M_{t+1}], \\
    y_{t+1} &= y_t + \beta_t [G(x_t, y_t) + N_{t+1}],
\end{align}
where $\{M_t\}, \{N_t\}$ are martingale difference sequences.

\emph{Assumptions.} (A1) The step sizes satisfy Robbins--Monro conditions:
\begin{equation}
\sum_t \alpha_t = \sum_t \beta_t = \infty,\qquad \sum_t (\alpha_t^2 + \beta_t^2) < \infty,\qquad \alpha_t/\beta_t \to 0.
\end{equation}
(A2) The functions $H, G$ are bounded and locally Lipschitz almost everywhere; the fast loop admits an ordinary differential equation (ODE) (or differential inclusion) representation.
(A3) For any fixed $x$, the fast dynamics ODE $\dot{y}(\tau) = G(x, y(\tau))$ has a unique globally asymptotically stable equilibrium $\lambda(x)$.

\emph{Proof sketch.} Consider the Lyapunov function $V_x(y) = \frac{1}{2}\|y - \lambda(x)\|^2$. Under (A2)--(A3), $V_x$ strictly decreases along trajectories of the fast ODE whenever $y \neq \lambda(x)$, hence $y_t$ tracks $\lambda(x_t)$ on the fast timescale~\cite{borkar_stochastic,kushner2003stochastic}. Given this tracking, the slow dynamics asymptotically follow the reduced ODE $\dot{x}(t) = H(x(t), \lambda(x(t)))$, and the two-timescale theorem implies that the coupled iterates $(x_t,y_t)$ converge (or track, under nonstationarity) the stable invariant set of this reduced system~\cite{borkar_stochastic,kushner2003stochastic}.

\begin{algorithm}[t]
\caption{PIC-RL: Learning with Prediction-Induced Censoring}
\label{alg:pic_rl}
\begin{algorithmic}[1]
\Require Historical data $\mathcal{D}_{\text{train}}$, Online horizon $T$, Hyperparameters $\delta_m, \delta_b, \gamma, N_{\text{update}}$
\Ensure Online provisioning decisions $a_{1:T}$

\Statex \textbf{Phase 1: Uncertainty-Aware Prediction (Offline)}
\State Train predictor $f_\theta$ via NLL minimization:
\State $\quad \theta^* \leftarrow \arg\min_\theta \sum_{(x_i, d_i) \in \mathcal{D}_{\text{train}}} \mathcal{L}_{\text{NLL}}\left(d_i; f_\theta(x_i)\right)$

\Statex \textbf{Phase 2: Offline Pre-training with Pessimistic Surrogate Inference (Offline)}
\State Initialize policy $\pi_\phi$, value network $V_\psi$
\For{epoch $= 1, \dots, E$}
    \State Rollout trajectory using simulator with action $a_\tau \sim \pi_\phi(s_\tau)$
    \For{each step $\tau$ in rollout}
        \If{censored ($c_\tau=1$)}
            \State Construct \textbf{Pessimistic Surrogate Reward}:
            \State $r_\tau \leftarrow -c_{\text{under}} \cdot \widehat{\text{gap}}(a_\tau; \hat{\mu}_\tau, \hat{\sigma}_\tau) \cdot \text{pess}(n_{\text{cens}})$ \Comment{Prop. 2}
        \Else
            \State Use true cost: $r_\tau \leftarrow -\ell(a_\tau, d_\tau)$
        \EndIf
    \EndFor
    \State Update $\phi, \psi$ via Actor-Critic on $\{r_\tau\}$ sequences
\EndFor

\Statex \textbf{Phase 3: Online RL and Dual-Timescale Adaptation (Online)}
\State Initialize fast calibration terms: $m_0 \leftarrow 0, b_0 \leftarrow 0$
\State Initialize Replay Buffer $\mathcal{B}$
\For{time step $t = 1, \dots, T$}
    \State \textbf{Predict:} $(\mu_t, \sigma_t) \leftarrow f_{\theta^*}(\text{window}_t)$
    \State \textbf{Plan:} $(\eta_t, k_t) \leftarrow \pi_\phi(s_t)$ \Comment{Slow Variable: Policy Output}
    \State \textbf{Act:} $a_t \leftarrow \text{clip}(\mu_t + k_t \sigma_t + m_t + b_t, 0, 1)$
    \State \textbf{Observe:} Censored feedback $(y_t, c_t)$ under the action-as-threshold model
    
    \State \textbf{Fast Update (Event-Driven Calibration):}
    \If{$c_t = 1$} \Comment{Under-provisioning (Shortage)}
        \State $m_{t+1} \leftarrow m_t + \eta_t \delta_m, \quad b_{t+1} \leftarrow b_t + \eta_t \delta_b$
    \Else \Comment{Over-provisioning}
        \State $m_{t+1} \leftarrow m_t - \eta_t \delta_m, \quad b_{t+1} \leftarrow b_t - \gamma \eta_t \delta_b$
    \EndIf
    
    \State \textbf{Reward \& Store:}
    \If{$c_t = 1$}
        \State $r_t \leftarrow -c_{\text{under}} \cdot \widehat{\text{gap}}(a_t; \hat{\mu}_t, \hat{\sigma}_t) \cdot \text{pess}(n_{\text{cens}})$ \Comment{Prop. 2}
    \Else
        \State $r_t \leftarrow -\ell(a_t, y_t)$
    \EndIf
    \State Append $(s_t, a_t, r_t, s_{t+1})$ to $\mathcal{B}$
    
    \State \textbf{Slow Update (Policy Refinement):}
    \If{$t \mod N_{\text{update}} == 0$}
        \State Update $\pi_\phi$ using $\mathcal{B}$ with KL regularization~\cite{vieillard2020leverage}
    \EndIf
    \State Update observation window with $(y_t, c_t)$
\EndFor
\end{algorithmic}
\end{algorithm}

\section{Experiments}
\label{sec:experiments}

PIC-RL is evaluated on three production traces against state-of-the-art forecasting, RL, and decision-focused learning baselines. Results show superior tracking accuracy and cost efficiency by learning from censored feedback.

\subsection{Experimental Setup}
\label{sec:setup}

\subsubsection{Production Workloads}
The evaluation uses three distinct industrial traces from Alibaba Cloud~\cite{yang2025prism,yan2026FlexPipe}, representing a spectrum from predictable patterns to highly stochastic dynamics. Table~\ref{tab:datasets} summarizes their statistical characteristics.

\begin{itemize}
    \item \textbf{Dataset A (Industrial DLRM):} GPU request logs from large-scale recommendation services~\cite{yang2025prism}. This workload exhibits strong diurnal seasonality (autocorrelation, Autocorr \(\approx 1.0\)) and moderate variability, serving as a baseline for predictable demand.
    \item \textbf{Dataset B (GenAI-Mem):} GPU memory usage from production Stable Diffusion serving clusters~\cite{yan2026FlexPipe}. Unlike DLRM, this workload is characterized by \textbf{extreme non-stationarity} (coefficient of variation, CV=0.87) and heavy-tailed bursts driven by batched inference scheduling, posing a severe challenge for censorship handling.
    \item \textbf{Dataset C (Context-Rich GenAI):} An augmented version of Dataset B that includes GPU duty-cycle metrics as auxiliary context features~\cite{yan2026FlexPipe}, testing the algorithm's ability to leverage multi-modal signals under partial observability.
\end{itemize}

The resource domain is mapped to \(A_t \in [0,1]\) via min-max normalization using statistics computed \emph{solely} on the training partition; the same scaling is applied to validation and test to prevent look-ahead bias. Values exceeding the training range are clipped to \([0,1]\), simulating continuous resource partitioning mechanisms (e.g., MPS or MIG). Chronological splitting (60/20/20) is used for training, validation, and testing to strictly prevent temporal leakage.

\begin{table}[t]
\centering
\caption{Characteristics of Production Workloads. High peak-to-mean ratio (PMR) and CV highlight the extreme stochasticity of GenAI traces compared to traditional DLRM.}
\label{tab:datasets}
\resizebox{\columnwidth}{!}{
\begin{tabular}{lcccc}
\toprule
\textbf{Dataset} & \textbf{Length} & \textbf{PMR} & \textbf{CV} & \textbf{Workload Dynamics} \\
\midrule
\textbf{DLRM} & 31 days & 2.99 & 0.69 & Strong Seasonality, Predictable \\
\textbf{GenAI-Mem} & 23 hrs & 5.20 & 0.87 & Heavy-tailed, Quasi-periodic Spikes \\
\textbf{GenAI-Ctx} & 23 hrs & 5.20 & 0.87 & Multi-variate, Context-Rich \\
\bottomrule
\end{tabular}
}
\end{table}

\subsubsection{Protocol: Endogenous Censoring Simulation}
To systematically benchmark learning efficacy under endogenous censoring, a \textbf{Counterfactual Simulation Environment} is used. Unlike static supervised evaluation, this environment enforces a dynamic feedback loop: the agent's chosen provision level \(A_t\) instantly determines the data availability for future training steps.
Formally, let \(D_t^*\) denote the unobserved oracle demand (from the trace). At each step \(t\), the environment reveals only the censored feedback:
\begin{equation}
Y_t = \min(D_t^*, A_t), \quad C_t = \mathbb{I}(D_t^* > A_t).
\end{equation}
The agent must update its policy solely based on \((Y_t, C_t)\); the oracle demand \(D_t^*\) is never revealed to the learner and is used only for offline performance evaluation. This protocol rigorously tests the algorithm's ability to break the self-reinforcing bias loop---a capability that cannot be assessed via standard metrics on fixed test sets.

\subsubsection{Baselines}
PIC-RL is compared against state-of-the-art methods via a \textbf{Component-wise Substitution} protocol whenever a baseline targets a specific module. The key challenge is that PIC tightly couples forecasting, control, and observability: many baselines are designed for a \emph{subsystem} (e.g., forecasting or offline RL) and do not specify an end-to-end procedure under endogenous censoring. Running them ``as-is'' would therefore confound missing components with algorithmic weakness. To isolate each baseline's \emph{core mechanism} under the same PIC loop, only the intended phase(s) are substituted while keeping the remaining phases and the counterfactual simulator fixed. For methods that define an end-to-end controller, evaluation is performed as a full-pipeline baseline under the same simulator and cost weights.
\begin{itemize}
    \item \textbf{Phase 1 (Predictor) Substitution:} Replace the Phase-1 probabilistic predictor with \textbf{Temporal Fusion Transformer (TFT)}~\cite{lim2021temporal}, \textbf{Informer}~\cite{zhou2021informer}, or \textbf{Autoformer}~\cite{wu2021autoformer} (with output heads adapted to Gaussian \((\mu_t,\sigma_t)\) for interface compatibility), while keeping Phase 2/3 unchanged.
    \item \textbf{Phase 2 (Offline RL) Substitution:} Replace Phase-2 actor--critic pretraining with \textbf{Conservative Q-Learning (CQL)}~\cite{kumar2020conservative}, while keeping Phase 1 and Phase 3 unchanged.
    \item \textbf{Phase 3 (Online Calibration) Substitution:} Replace Phase-3 margin/bias adaptation with \textbf{Conformal Prediction}~\cite{angelopoulos2023conformal,conformal_newsvendor2024} built on the same Phase-1 predictor; Phase 2 is omitted as Conformal does not define an offline RL pretraining component. We use one-sided (upper) calibration to match the newsvendor-style asymmetric decision objective.
    \item \textbf{Phase 2+3 (Decision Module) Substitution:} Replace the Phase-2/3 decision module with (i) \textbf{Thompson Sampling (TS)}~\cite{agrawal2013thompson} as a Bayesian exploration--exploitation controller, or (ii) a \textbf{Pensieve}-style actor--critic controller~\cite{mao2017neural} by substituting the policy/value networks (including a 1D-CNN state encoder) in both Phase 2 and Phase 3. For fairness, both controllers are adapted only at the interface level to our continuous decision parameterization and state features, while preserving their core exploration/actor--critic mechanisms.
    \item \textbf{Full-Pipeline Baselines:} \textbf{SPO+} (decision-focused training)~\cite{elmachtoub2022smart} and \textbf{Autopilot} (rule-based scaling)~\cite{rzadca2020autopilot}.
\end{itemize}

\subsubsection{Metrics}
\textbf{MAE} is reported as the primary metric for provisioning accuracy: \(\text{MAE} = \frac{1}{T}\sum |A_t - D_t^*|\). Secondary metrics include \textbf{Regret} (Eq.~\ref{eq:regret}), which captures the asymmetric cost of over- and under-provisioning. Lower values indicate better decision quality for both MAE and Regret. Note that Regret depends on specific cost weights (\(c_{\text{under}}=2, c_{\text{over}}=1\)), while MAE provides a generic, transferable measure of tracking quality.

\subsection{Internal Validation and Dynamics}
\label{sec:internal-validation}

Learning dynamics across the three training phases are analyzed on the GenAI-Mem dataset to validate the internal mechanisms of PIC-RL.

\paragraph{Phase 1: Calibrated Uncertainty Quantification.}
The foundation of risk-aware decision-making is a well-calibrated probabilistic predictor. Figure~\ref{fig:phase1-training} shows rapid NLL stabilization and a validation MAE of about 0.0752, indicating reliable estimation of both the conditional mean and dispersion. This yields a point forecast $\mu_t$ together with an uncertainty signal $\sigma_t$, which operationalizes the information--cost trade-off by controlling how aggressively the system buys observability under PIC.

\begin{figure}[t]
\centering
\includegraphics[width=\linewidth]{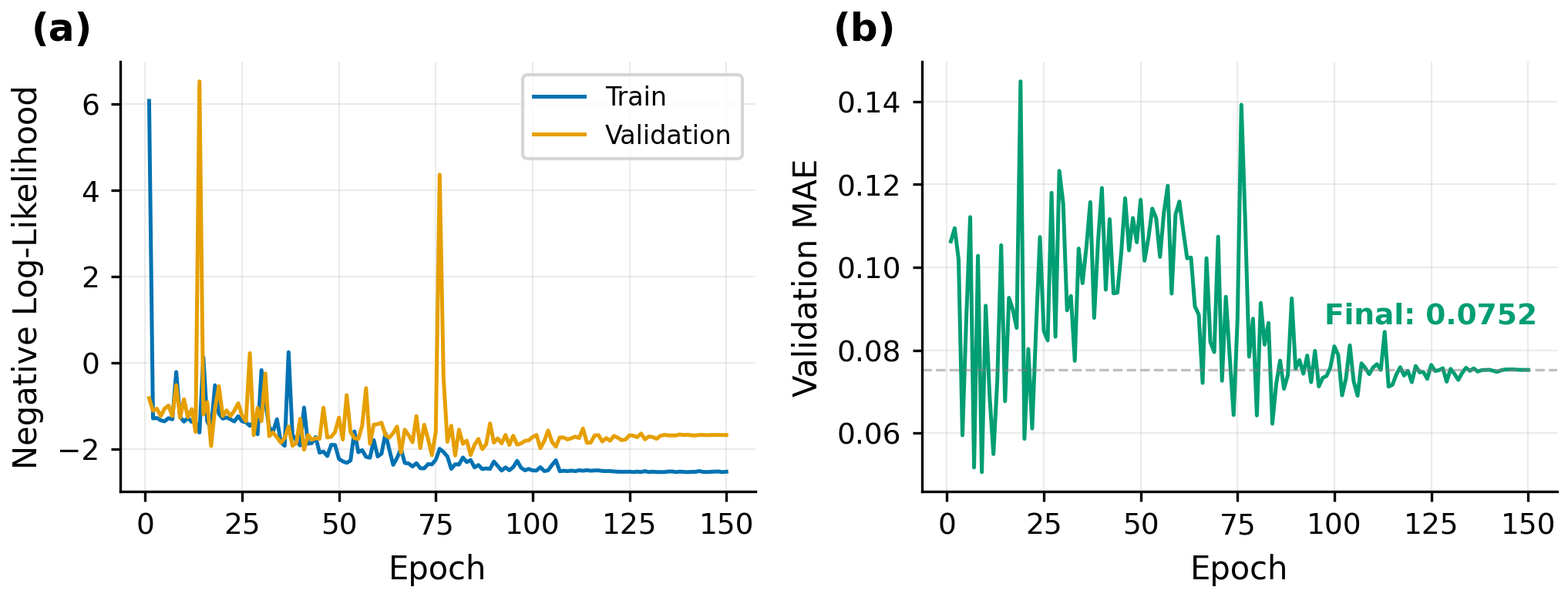}
\caption{Phase 1 Training Dynamics. (a) Rapid NLL convergence confirms the model learns the full demand distribution. (b) Stable validation MAE demonstrates robust generalization without overfitting.}
\label{fig:phase1-training}
\end{figure}

\paragraph{Phase 2: Pessimistic Policy Crystallization.}
In Phase 2, the policy is pretrained with the surrogate reward to warm-start decision-making under censoring. In Figure~\ref{fig:phase2-training}, the critic value loss drops by 79\% over 150 iterations, consistent with the directionally correct learning signal induced by the IMR-based surrogate (Proposition~2). The resulting policy tracks bursty spikes while maintaining a modest buffer, reducing shortages without resorting to persistent over-provisioning.

\begin{figure}[t]
\centering
\includegraphics[width=\linewidth]{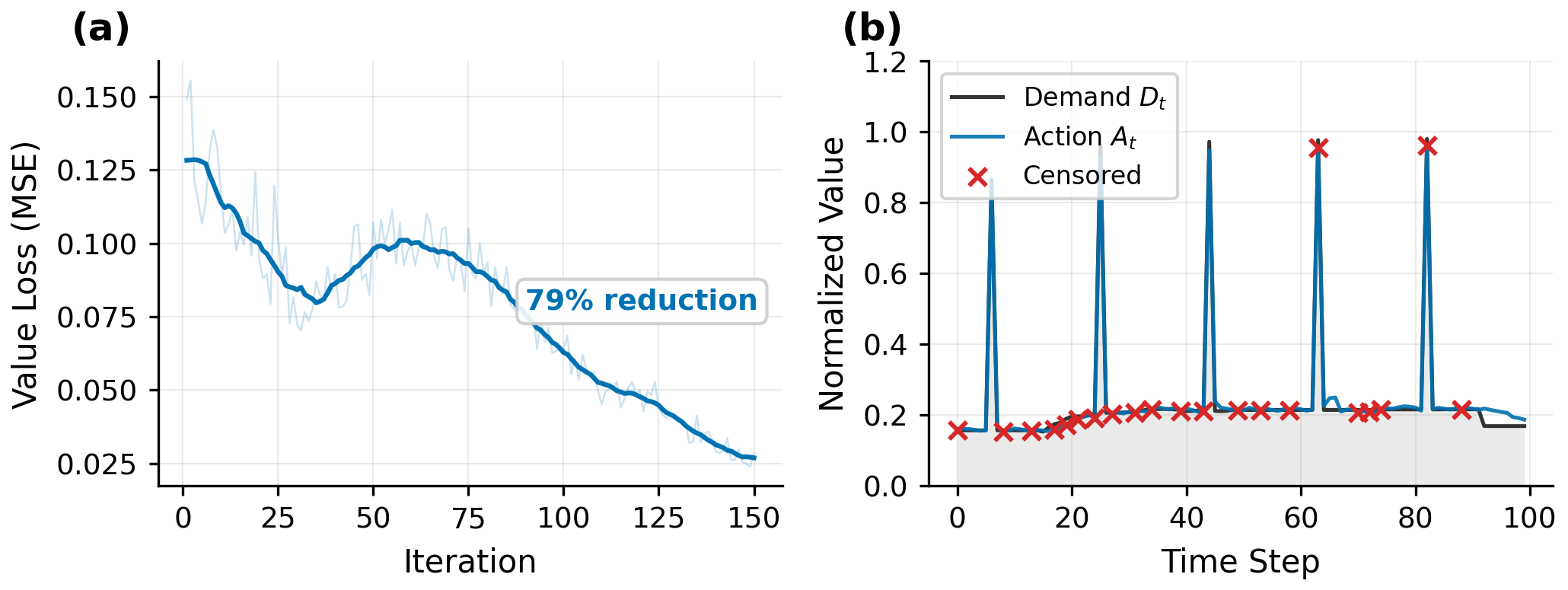}
\caption{Phase 2 Offline Pretraining. (a) 79\% reduction in value loss validates the critic's ability to learn from censored feedback. (b) The derived policy effectively anticipates demand spikes (red crosses indicate censored events).}
\label{fig:phase2-training}
\end{figure}

\paragraph{Phase 3: Dual-Timescale Adaptation.}
Phase 3 evaluates online adaptation under endogenous censoring. Figure~\ref{fig:phase3-dynamics} shows sub-linear growth of cumulative regret under censored feedback, indicating sustained improvement in decision quality. The dual-timescale dynamics separate fast responsiveness (margin) from slow drift-tracking (bias), stabilizing the closed loop while keeping the error distribution $A_t-D_t$ balanced rather than persistently biased toward over-provisioning.

\begin{figure}[t]
\centering
\includegraphics[width=\linewidth]{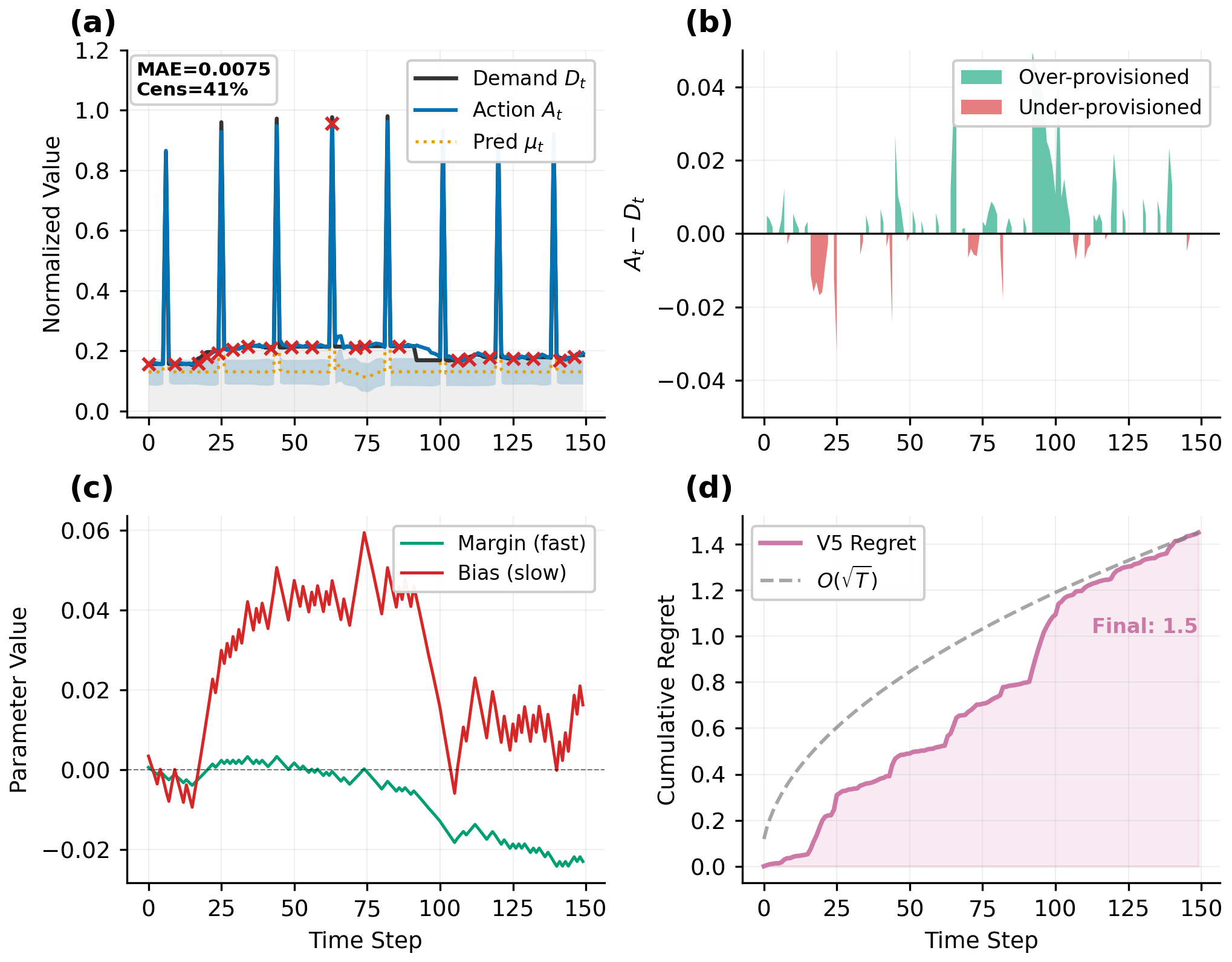}
\caption{Phase 3 Online Performance. (a) Deployment trajectory showing effective tracking within $\mu \pm \sigma$ bounds. (b) Balanced error distribution. (c) Distinct evolution of Fast (margin) and Slow (bias) variables stabilizing the feedback loop. (d) Sub-linear regret growth ($O(\sqrt{T})$) confirming successful adaptation.}
\label{fig:phase3-dynamics}
\end{figure}

\subsection{Main Comparison Results}
\label{sec:comparisons}

Table~\ref{tab:main-results} compares PIC-RL against baselines substituted at different phases under the same PIC loop. The results show that each phase contributes to robust performance under endogenous censoring.

\begin{table}[t]
\centering
\caption{Main Comparison Results. PIC-RL consistently outperforms baselines on MAE and Regret. By systematically substituting key modules (Phase 1, Phase 2, Phase 2+3, or Phase 3), we observe that no single component is sufficient on its own.}
\label{tab:main-results}
\small
\setlength{\tabcolsep}{4pt}
\begin{tabular}{l cc cc cc}
\toprule
& \multicolumn{2}{c}{\textbf{DLRM-GPU}} & \multicolumn{2}{c}{\textbf{GenAI-Ctx}} & \multicolumn{2}{c}{\textbf{GenAI-Mem}} \\
\cmidrule(lr){2-3} \cmidrule(lr){4-5} \cmidrule(lr){6-7}
\textbf{Method} & \textbf{MAE} & \textbf{Reg} & \textbf{MAE} & \textbf{Reg} & \textbf{MAE} & \textbf{Reg} \\
\midrule
\textbf{PIC-RL (Ours)} & \textbf{0.0058} & \textbf{22.8} & \textbf{0.0112} & \textbf{5.4} & \textbf{0.0104} & \textbf{2.4} \\
\midrule
\multicolumn{7}{l}{\textbf{Phase 1 Substitution (Predictor)}} \\
TFT~\cite{lim2021temporal} & 0.0085 & 30.3 & 0.0238 & 6.2 & 0.0113 & 6.6 \\
Informer~\cite{zhou2021informer} & 0.0129 & 37.5 & 0.0195 & 4.3 & 0.0124 & 2.5 \\
Autoformer~\cite{wu2021autoformer} & 0.0112 & 29.0 & 0.1285 & 29.9 & 0.2567 & 54.6 \\
\midrule
\multicolumn{7}{l}{\textbf{Phase 2 Substitution (Offline RL)}} \\
CQL~\cite{kumar2020conservative} & 0.0098 & 27.5 & 0.0385 & 8.3 & 0.0162 & 5.5 \\
\midrule
\multicolumn{7}{l}{\textbf{Phase 3 Substitution (Online Calibration; no Phase-2 pretraining)}} \\
Conformal~\cite{conformal_newsvendor2024} & 0.0072 & 34.4 & 0.0456 & 38.1 & 0.0637 & 33.1 \\
\midrule
\multicolumn{7}{l}{\textbf{Phase 2+3 Substitution (Decision Module)}} \\
Pensieve~\cite{mao2017neural} & 0.0104 & 25.3 & 0.0184 & 6.0 & 0.0154 & 5.0 \\
TS~\cite{agrawal2013thompson} & 0.0112 & 31.8 & 0.2070 & 11.7 & 0.0261 & 15.0 \\
\midrule
\multicolumn{7}{l}{\textbf{Full-Pipeline Baselines}} \\
SPO+~\cite{elmachtoub2022smart} & 0.0158 & 54.1 & 0.0467 & 23.4 & 0.0466 & 23.5 \\
Autopilot~\cite{rzadca2020autopilot} & 0.0829 & 149.5 & 0.0836 & 32.7 & 0.0836 & 32.7 \\
\bottomrule
\end{tabular}
\end{table}

\paragraph{Impact of Phase 1 Substitution.}
Replacing our probabilistic predictor with standard forecasters (TFT, Informer) yields similar MAE but higher Regret (e.g., TFT Regret +175\% on GenAI-Mem). This reflects \textit{prediction--decision misalignment}: forecasting losses under censoring fail to capture tail risk. In contrast, our NLL-based predictor provides calibrated uncertainty ($\sigma_t$) for risk-aware buffering.

\paragraph{Impact of Phase 3 Substitution.}
\textbf{Conformal Prediction} (Phase 3 Subst.; no Phase-2 pretraining) underperforms (Regret +1279\% on GenAI-Mem) because one-sided calibration on \textit{censored residuals} inherits the selection bias in Proposition 1. PIC-RL instead uses IMR to correct the censored feedback signal.

\paragraph{Impact of Phase 2/3 Substitution.}
Substituting the RL module reveals two distinct failure modes.
\textbf{CQL (Phase 2 Subst.)} exhibits ``conservative collapse'' (Figure~\ref{fig:trajectory-comparison}d): without dual-timescale calibration, it converges to a static policy that misses bursts.
\textbf{Pensieve-style Actor--Critic and Thompson Sampling (Phase 2+3 Subst.)} degrade under censoring (e.g., TS Regret +525\% on GenAI-Mem) because generic exploration/control (even with interface-compatible state/action parameterizations) lacks IMR-based counterfactual learning from censored events and can be destabilized by prolonged information blackouts.

\paragraph{Full-Pipeline Analysis.}
Full-stack baselines exhibit structural deficits. \textbf{SPO+} drifts under online censoring because its decision-focused loss assumes fully observed labels. \textbf{Autopilot} over-provisions to reduce shortages, yielding the highest Regret and MAE (Figure~\ref{fig:trajectory-comparison}b).

\begin{figure}[t]
\centering
\includegraphics[width=\linewidth]{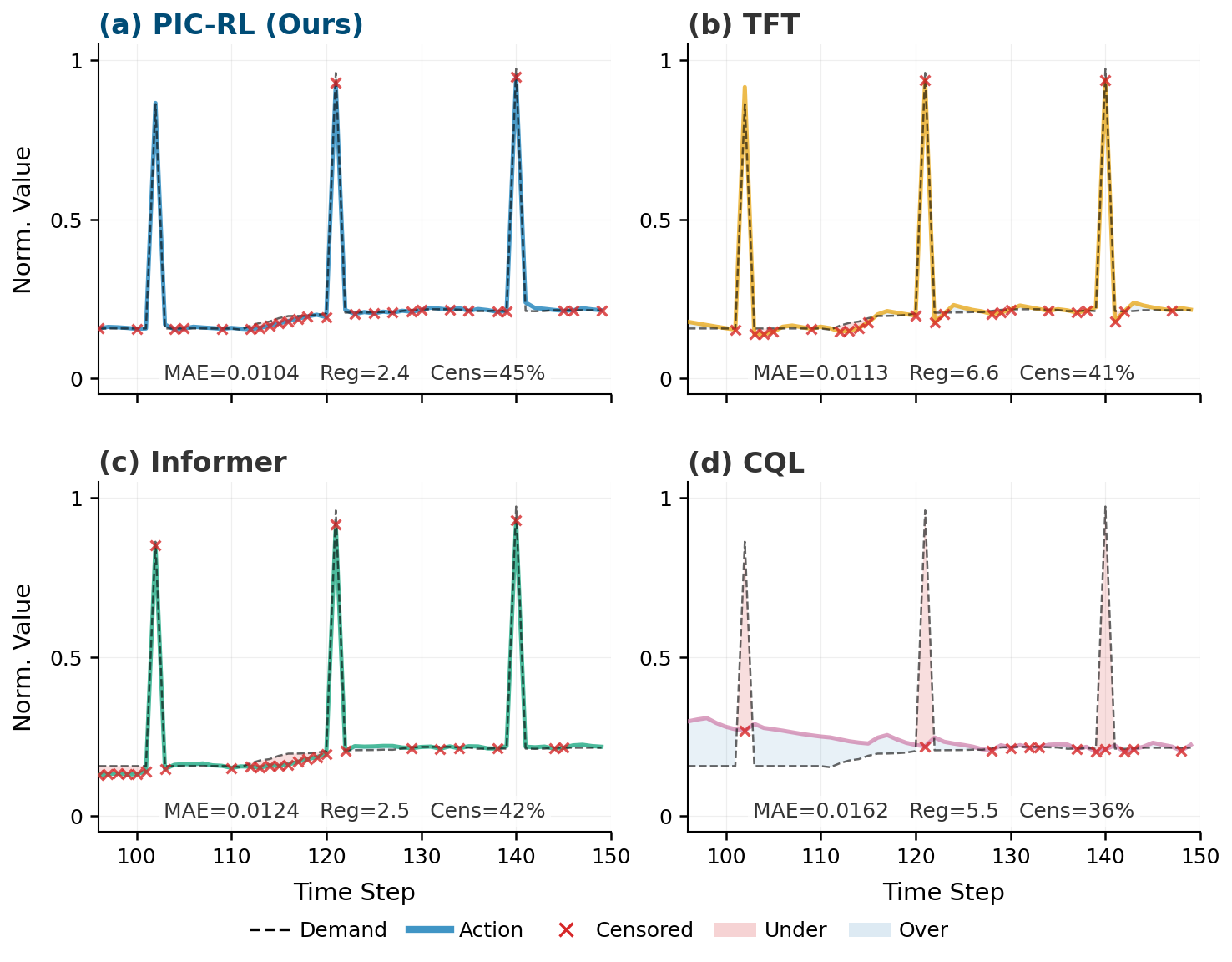}
\caption{Trajectory Comparison (GenAI-Mem). PIC-RL (top) tightly tracks demand. In contrast, CQL (Phase 2 Subst.) exhibits "conservative collapse", while TFT (Phase 1 Subst.) captures periodicity but underestimates stochastic bursts.}
\label{fig:trajectory-comparison}
\end{figure}

\subsection{Ablation Study}
\label{sec:ablation}

An ablation study is conducted to attribute performance gains to individual components. The results (Table~\ref{tab:ablation-full} and Figure~\ref{fig:ablation-bars}) are consistent with the three-phase design: \textit{Foundational} (Phase 1), \textit{Structural} (Phase 2), and \textit{Stabilizing} (Phase 3).

\paragraph{Foundational Role of Uncertainty (A1).}
Removing uncertainty quantification (A1: w/o Uncertainty) yields the largest degradation. As shown in Figure~\ref{fig:ablation-bars}, dropping the probabilistic output $\sigma_t$ increases MAE by \textbf{646\%} on GenAI-Ctx and \textbf{328\%} on GenAI-Mem (relative to Full). Without $\sigma_t$, the controller cannot form an uncertainty-based buffer or compute the IMR-based surrogate term, effectively reducing decisions to a point estimate under asymmetric costs. This supports the necessity of uncertainty quantification for decision-making under endogenous censoring.

\paragraph{Structural Necessity of Pessimism (A2--A3).}
Even with access to uncertainty, the agent requires a mechanism to translate risk into correct gradients.
\begin{itemize}
    \item \textbf{w/o Censored Reward (A2):} Removing the IMR-based surrogate reward increases MAE by \textbf{+57\%} (DLRM), \textbf{+115\%} (GenAI-Ctx), and \textbf{+55\%} (GenAI-Mem), indicating that censored steps provide insufficient directional learning signal without counterfactual surrogate feedback.
    \item \textbf{w/o Pessimism (A3):} Removing the pessimism factor $\Psi(n)$ increases MAE by \textbf{+53\%} (DLRM), \textbf{+113\%} (GenAI-Ctx), and \textbf{+40\%} (GenAI-Mem), indicating that gradient amplification is needed to escape persistent censoring traps.
\end{itemize}

\paragraph{Control-Theoretic Stability (A4--A7).}
The remaining ablations (w/o $k\sigma$, KL, EMA, Pretraining) are stability refinements. Relative to Full, these changes increase MAE by \textbf{9\%--102\%} across datasets (Table~\ref{tab:ablation-full}). For example, on GenAI-Ctx, removing KL (A5) increases MAE by \textbf{102\%}, whereas on GenAI-Mem, removing EMA (A6) increases MAE by \textbf{9\%}. Removing KL regularization (A5) or exponential moving average (EMA) (A6) also increases action variance, consistent with the role of dual-timescale stabilization.

\begin{figure}[t]
\centering
\includegraphics[width=\linewidth]{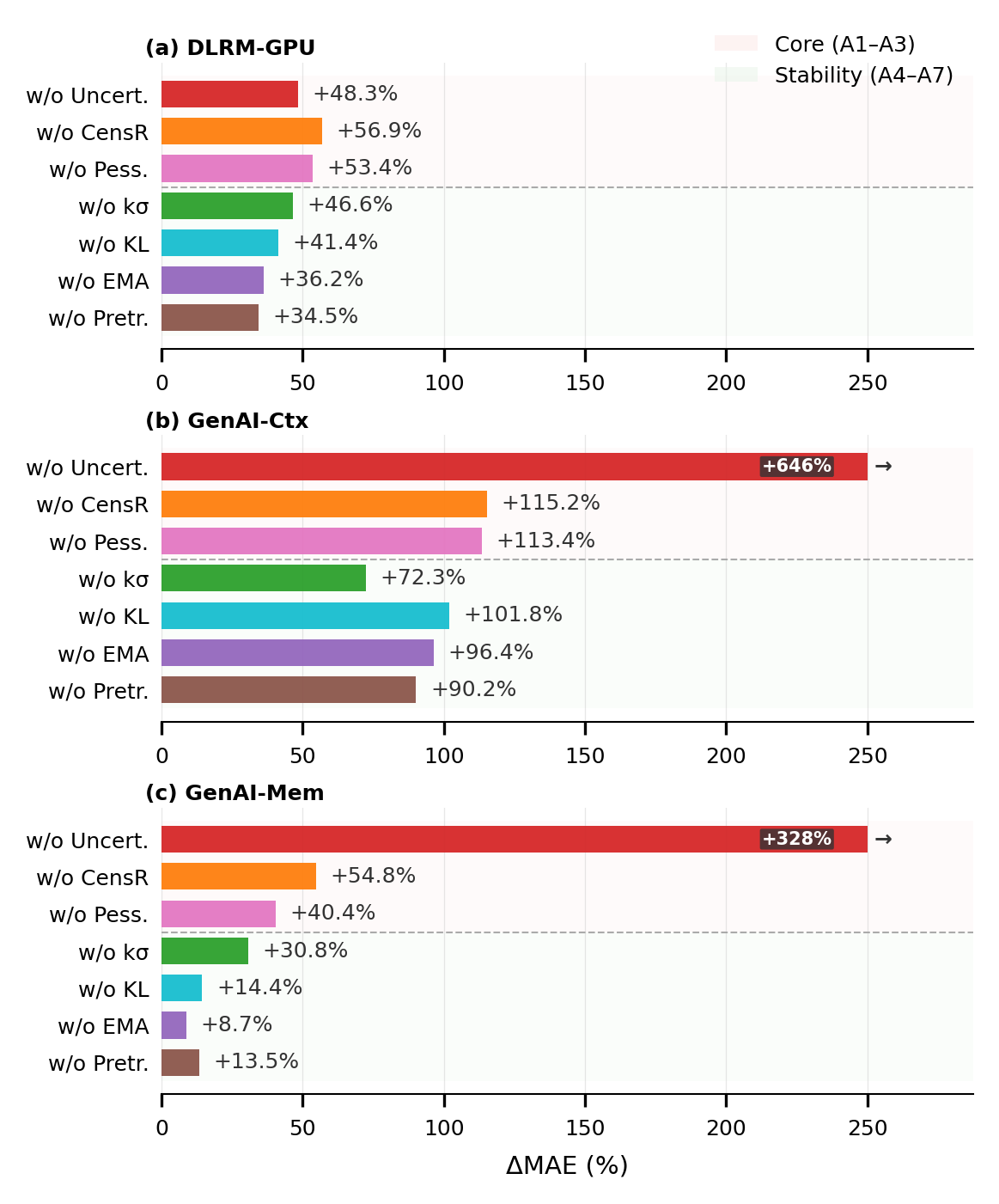}
\caption{Ablation Study (MAE Impact). Uncertainty removal (A1) yields the largest degradation, while removing pessimism (A3) or IMR (A2) weakens adaptation. Stability refinements (A4--A7) provide consistent robustness gains.}
\label{fig:ablation-bars}
\end{figure}

\begin{table}[t]
\centering
\caption{Complete Ablation Results. Metrics degrade consistently across ablations. Uncertainty (A1) and Censored Reward (A2) are the most critical components.}
\label{tab:ablation-full}
\resizebox{\columnwidth}{!}{
\begin{tabular}{l|cc|cc|cc}
\toprule
 & \multicolumn{2}{c|}{\textbf{DLRM}} & \multicolumn{2}{c|}{\textbf{GenAI-Ctx}} & \multicolumn{2}{c}{\textbf{GenAI-Mem}} \\
\textbf{Method} & \textbf{MAE} & \textbf{Regret} & \textbf{MAE} & \textbf{Regret} & \textbf{MAE} & \textbf{Regret} \\
\midrule
\textbf{PIC-RL (Full)} & \textbf{0.0058} & \textbf{22.8} & \textbf{0.0112} & \textbf{5.4} & \textbf{0.0104} & \textbf{2.4} \\
\midrule
\multicolumn{7}{l}{\textit{Foundational \& Structural}} \\
A1: w/o Uncert. & 0.0086 & 25.6 & 0.0836 & 7.3 & 0.0445 & 3.4 \\
A2: w/o CensRwd & 0.0091 & 28.2 & 0.0241 & 7.4 & 0.0161 & 3.8 \\
A3: w/o Pessimism & 0.0089 & 26.3 & 0.0239 & 6.8 & 0.0146 & 3.1 \\
\midrule
\multicolumn{7}{l}{\textit{Control Stability}} \\
A4: w/o $k\cdot\sigma$ & 0.0085 & 24.6 & 0.0193 & 6.1 & 0.0136 & 3.5 \\
A5: w/o KL & 0.0082 & 24.1 & 0.0226 & 5.8 & 0.0119 & 2.7 \\
A6: w/o EMA & 0.0079 & 23.3 & 0.0220 & 5.7 & 0.0113 & 2.6 \\
A7: w/o Pretrain & 0.0078 & 23.5 & 0.0213 & 5.6 & 0.0118 & 2.7 \\
\bottomrule
\end{tabular}
}
\end{table}

\section{Conclusion}
\label{sec:conclusion}

This work formalizes \textbf{Prediction-Induced Censoring}, a pervasive challenge in \textit{online decision support systems} where decisions shape which data become observable. Under PIC, standard learning pipelines can enter a self-reinforcing loop: under-provisioning reduces label availability, amplifies selection bias, and leads to persistent service degradation.

To break this loop, \textbf{PIC-RL} is introduced as a closed-loop decision support framework that transforms censoring from a data quality issue into a usable decision signal. The design couples uncertainty-aware prediction for managing the information--cost trade-off, conservative surrogate feedback for learning directionally correct updates from shortage events (Prop.~1--2), and dual-timescale online adaptation for stable operation under drift. Experiments on production GenAI traces show consistent improvements over strong baselines, reducing service degradation by up to 50\% while maintaining cost efficiency.

\paragraph{Future Directions.}
Future work includes extending the framework to \textbf{multi-resource coupled constraints} (e.g., joint GPU memory and compute censoring) and strengthening \textbf{operational governance} for decision support, such as monitoring, auditability, and safe adaptation under distribution shift.

\bibliographystyle{unsrt}
\bibliography{references}

\end{document}